\documentclass[copyright,creativecommons]{eptcs}

\usepackage{iftex}
\usepackage{underscore}         
\usepackage[T1]{fontenc} 
\usepackage[english]{babel}

\usepackage{multirow,bigdelim}
\usepackage{centernot}

\usepackage{tikz}
\usetikzlibrary{arrows,chains,matrix,positioning,scopes}
\makeatletter
\tikzset{join/.code=\tikzset{after node path={%
\ifx\tikzchainprevious\pgfutil@empty\else(\tikzchainprevious)%
edge[every join]#1(\tikzchaincurrent)\fi}}}
\makeatother
\tikzset{>=stealth',every on chain/.append style={join},
         every join/.style={->}}
\tikzstyle{labeled}=[execute at begin node=$\scriptstyle,
   execute at end node=$]

\usepackage{booktabs} 
\usepackage[ruled]{algorithm2e} 

\SetAlFnt{\small}
\SetAlCapFnt{\small}
\SetAlCapNameFnt{\small}
\SetAlCapHSkip{0pt}
\IncMargin{-\parindent}

\usepackage{amsmath, amsthm,amsfonts}
\usepackage{mathtools}
\usepackage{thm-restate}
\usepackage{color}
\usepackage{lscape}
\usepackage{soul}
\usepackage{bm, bbm}
\usepackage{enumerate}
\usepackage{tabto}

\newcommand{\AutoAdjust}[3]{{ \mathchoice{ \left #1 #2  \right #3}{#1 #2 #3}{#1 #2 #3}{#1 #2 #3} }}
\newcommand{\Xcomment}[1]{{}}

\newcommand{\InBrackets}[1]{\AutoAdjust{[}{#1}{]}}
\newcommand{\Ex}[2][]{\operatorname{\mathbbm E}_{#1}\InBrackets{#2}}


\newtheorem{theorem}{Theorem}[section]
\newtheorem{proposition}[theorem]{Proposition}
\newtheorem{lemma}[theorem]{Lemma}
\newtheorem{definition}[theorem]{Definition}
\newtheorem{claim}[theorem]{Claim}
\newtheorem{corollary}[theorem]{Corollary}

\newtheorem{example}[theorem]{Example}

\DeclareMathOperator*{\argmin}{arg\,min}
\DeclareMathOperator*{\argmax}{arg\,max}

\newcommand{\noaccents}[1]{#1}
\newcommand{\newagentvar}[3][\noaccents]{%
\expandafter\newcommand\expandafter{\csname #2\endcsname}{#1{#3}}%
\expandafter\newcommand\expandafter{\csname #2s\endcsname}{#1{\boldsymbol{#3}}}%
\expandafter\newcommand\expandafter{\csname #2smi\endcsname}[1][i]{#1{\boldsymbol{#3}}_{-##1}}%
\expandafter\newcommand\expandafter{\csname #2i\endcsname}[1][i]{#1{#3}_{##1}}%
\expandafter\newcommand\expandafter{\csname #2ith\endcsname}[1][i]{#1{#3}_{(##1)}}%
}

\newcommand{\newvecagentvar}[3][\noaccents]{%
\expandafter\newcommand\expandafter{\csname #2\endcsname}{#1{\boldsymbol{#3}}}%
\expandafter\newcommand\expandafter{\csname #2s\endcsname}{#1{\boldsymbol{#3}}}%
\expandafter\newcommand\expandafter{\csname #2smi\endcsname}[1][i]{#1{\boldsymbol{#3}}_{-##1}}%
\expandafter\newcommand\expandafter{\csname #2i\endcsname}[1][i]{#1{\boldsymbol{#3}}_{##1}}%
\expandafter\newcommand\expandafter{\csname #2ith\endcsname}[1][i]{#1{#3}_{(##1)}}%
}



\newagentvar{val}{v}
\newagentvar{bid}{b}
\newagentvar{dist}{F}
\newagentvar{alloc}{x}
\newagentvar{util}{u}
\newagentvar{pay}{p}
\newagentvar{ability}{a}
\newagentvar{effort}{e}



\title{Optimal Mechanism Design for Agents with DSL Strategies: The Case of Sybil Attacks in Combinatorial Auctions}
\author{Yotam Gafni
\institute{Technion \\ Haifa, Israel}
\email{yotam.gafni@campus.technion.ac.il}
\and
Moshe Tennenholtz
\institute{Technion \\ Haifa, Israel}
\email{moshet@ie.technion.ac.il}
}

 \begin{document}
%

\maketitle

\begin{abstract}
     In robust decision making under uncertainty, a natural choice is to go with 
     safety (aka security) level strategies. However, in many important cases, most notably auctions, there is a large multitude of safety level strategies, thus making the choice unclear. 
     We consider two refined notions: 
     \begin{itemize}
     \item a term we call DSL (distinguishable safety level), and is based on the notion of ``discrimin'' \cite{discrimin}, which uses a pairwise comparison of actions while removing trivial equivalencies. This captures the fact that when comparing two actions an agent should not care about payoffs in situations where they lead to identical payoffs. 
     \item The well-known Leximin notion from social choice theory, which we apply for robust decision-making. In particular, the leximin is always DSL but not vice-versa \cite{discrimin}. 
     \end{itemize}
     We study the relations of these notions to other robust notions, and illustrate the results of their use in auctions and other settings. Economic design aims to maximize social welfare when facing self-motivated participants. In online environments, such as the Web, participants' incentives take a novel form originating from the lack of clear agent identity---the ability to create Sybil attacks, i.e., the ability of each participant to act using multiple identities. It is well-known that Sybil attacks are a major obstacle for welfare-maximization. 
Our main result proves that when DSL attackers face uncertainty over the auction's bids, the celebrated VCG mechanism is welfare-maximizing even under Sybil attacks. Altogether, our work shows a successful fundamental synergy between robustness under uncertainty, economic design, and agents' strategic manipulations in online multi-agent systems. 
\end{abstract}

\section{Introduction}

Consider an agent who needs to decide on her action in an environment consisting of other agents. In certain cases there is a uniquely defined optimal action for the agent, but in most  cases this ``agent perspective'' is an open challenge. Given the above, both AI and economics care about an adequate modeling of an agent, and its ramifications in a variety of multi agent contexts, for example, on social welfare.

We consider a notion for agent modeling we term DSL (Distinguishable Safety-Level). The notion was previously suggested in the context of constraint-satisfaction problems and fuzzy logic, and was termed ``discrimin'' \cite{discrimin}. In game theoretic settings, the notion was previously applied \cite{discriminBoolean} as a solution concept for bargaining in Boolean games \cite{booleanGames}. To the best of our knowledge, it was not previously considered in the context of auctions, voting, and more generally mechanism design, i.e., when considering the robustness of economic mechanisms' performance when facing strategic agents. 

There are two ways to think of the DSL solution concept, when applied to agent modeling. One is as a solution concept adapted to capture the behavioral phenomenon of the loss aversion cognitive bias in agents, particularly when probabilities over nature states are unknown. The other is as a form of robust strategy choice under uncertainty, that may be required in volatile and unpredictable environments that do not admit a stable Bayesian description. We show its usefulness in auctions. In the full version of this paper, we also study its behavior in other prominent strategic settings, such as voting. 
In our main result we consider the celebrated welfare maximizing VCG mechanism in combinatorial auctions setting, where it is known to fail under false name (aka Sybil) attacks. We show that DSL agents lead to optimal social welfare. 


\subsection{Reasoning under Uncertainty}

A classic distinction \cite{merrill1981strategic} separates reasoning under risk, where the actors are rational and there is a commonly known distribution about their environment  (also known as the stochastic or Bayesian setting), and reasoning under uncertainty, where the general structure of strategies and outcomes is known, but there is no probabilistic information about the environment. Moreover, even assumptions regarding actors' rationality or behavior characteristics may not be guaranteed 
. For such cases, a robust or worst-case approach seems appropriate, and various notions exist to capture it. Ideally, a dominant strategy solution exists, but this is usually not the case (and indeed it is not the case in all the cases we analyze in this paper). A minimal robust notion is that of a safety level strategy, which uses a max-min approach over all possible outcomes given a strategy choice. However, though it yields interesting results in some cases \cite{aumann1985non, tennenholtz2001rational}, in many other cases it does not tell us much about what strategy to choose, in particular in auctions settings, 
where we derive our most interesting results. As we see, this is because in auctions the natural safety level is $0$ (which happens when the bidder loses the auction), and any strategy that does not overbid guarantees it. It is thus hard to choose among these strategies without considering a more refined notion. 
Existing refined notions are the lexicographic max-min (originally defined in \cite{leximin}) and 
min-max regret \cite{savage1951theory}. We overview their comparison to the notion of DSL in Section~\ref{SEC:OTHER_CONCEPTS} and Appendix~\ref{app:minmax_regret}, respectively. 


\subsection{VCG, Sybil Attacks, and Welfare}

VCG is a well known mechanism which can be applied for combinatorial auctions. VCG has good qualities such as being dominant strategy incentive compatible and achieving optimal social welfare. However, under the possibility of false-name attacks \cite{yokoo2004effect}, it is no longer truthful. Coming up with other mechanisms does not solve the basic conundrum: In the full information settings, any false-name proof mechanism performs poorly in terms of welfare \cite{iwasaki2010worst}. 

A possible avenue to solving the issue is by limiting the discussed valuation classes. However, an example in \cite{lehmann2006combinatorial} shows that even when all bidders have sub-modular valuations, VCG is no longer dominant strategy incentive-compatible under false-name attacks. Notably though, even with this example, VCG still arrives at the socially optimal allocation, and in fact as \cite{alkalay2014false} show, this observation is true in general up to a constant with sub-modular (and near sub-modular) bidders. However, in the full version of our paper, we show an example 
where for the XOS valuation class, which extends the sub-modular class, there is such an attack so that VCG arrives at an arbitrarily sub-optimal allocation. 
The attack we describe is enabled by the full information settings. Without full information, the attack is risky for the attacker, since it could lead to negative utility, as the attacker overbids her true valuation. 

A useful approach, that can lead to better welfare guarantees than dominant strategy mechanism design, is Bayesian mechanism design. Assuming that the bidder distributions are common knowledge, recent work has shown that selling each item separately leads to good constant approximation welfare guarantees for XOS \cite{christodoulou2008bayesian} and sub-additive \cite{feldman2013simultaneous} valuations. Though the works do not explicitly consider false-name attacks, their constructions use the false-name-proof first and second price auctions to auction items separately, and so their results naturally extend to Bayesian false-name mechanism design.

It is important to note, that many of the above positive results for welfare guarantees under false-name attack assume some form of risk-aversion; most importantly, that bidders do not overbid, i.e., they choose only strategies that are individually rational (under any possible nature state). This condition is equivalent to limiting the strategy space only to safety level strategies 
(as in this case of combinatorial auctions, the safety level is 0). In \cite{gafni2020vcg} the authors do not make this assumption, but their positive welfare optimality results are limited as they only consider the homogeneous single-minded case with two items. We thus believe that it is natural to ask: Under our definition of DSL, which is a strong risk-aversion notion (compared, e.g., to the safety level strategy), what welfare guarantees can be obtained? Surprisingly, the answer is optimal, as we show in our main result in Theorem~\ref{thm:vcg}.

\subsection{Our Results}
In Section~\ref{sec:motivating_example} we formally define our solution concept, and apply it to the first-price and discrete first-price auctions. In Section~\ref{SEC:OTHER_CONCEPTS} (with the additional discussion of min-max regret in Appendix~\ref{app:minmax_regret}) we describe a hierarchy of solution concepts and their relations to the solution concept we introduce (\text{DSL}), as summarized in Figure~\ref{fig:notions_hierarchy}.  

\begin{figure}
\centering
\includegraphics[width=0.3\textwidth]{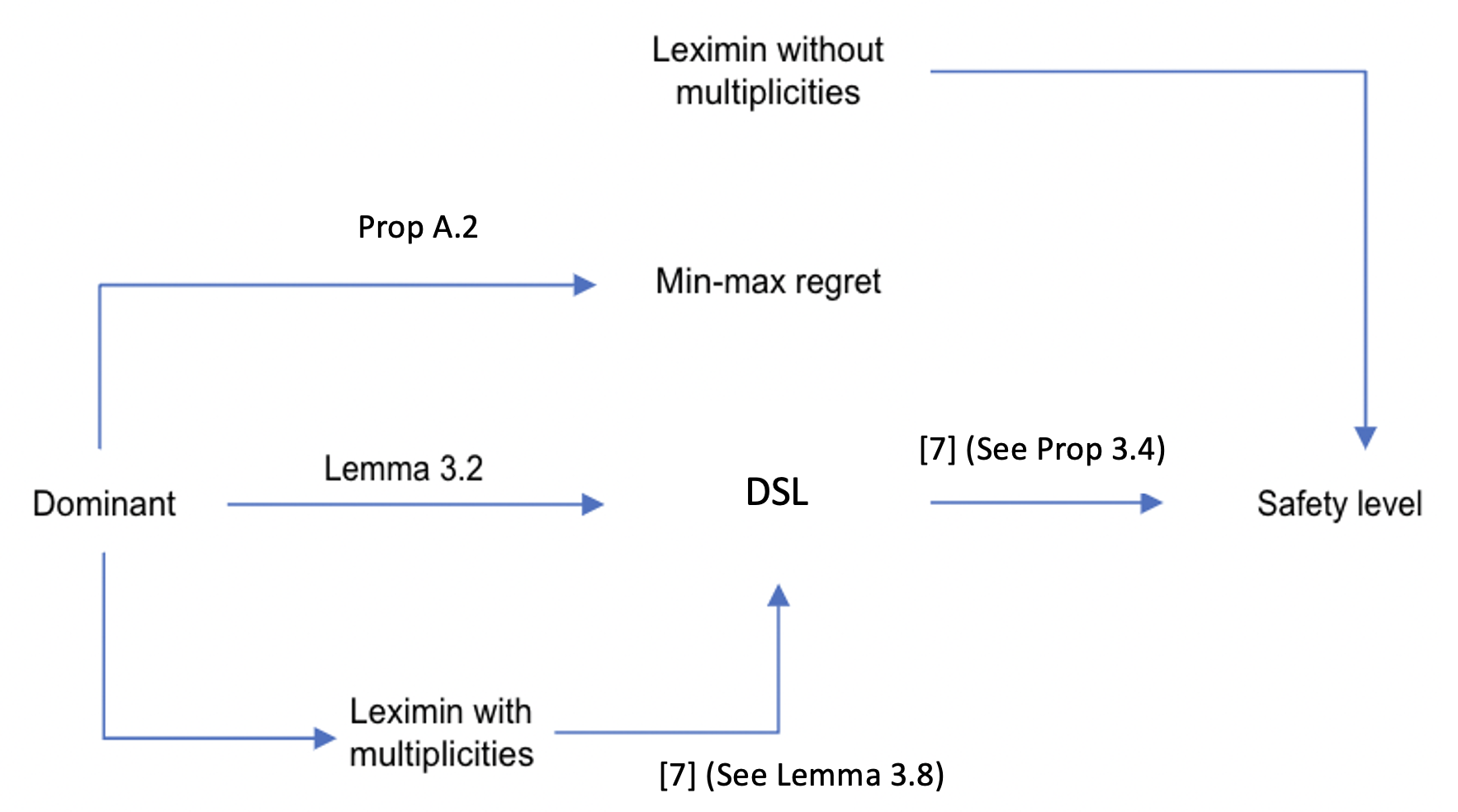}
\caption{Hierarchy for robust decision under uncertainty}
\label{fig:notions_hierarchy}
\end{figure}

In Section~\ref{sec:vcg_fnp}, we present our main result. We discuss VCG as a combinatorial auction under false-name attacks, when bidders may create shill identities to send bids. It is known that VCG is not dominant strategy truthful in these settings, and previous results were limited in establishing good welfare guarantees for combinatorial auctions generally under false-name attacks. We show that when bidders use DSL strategies, VCG achieves optimal welfare even under the threat of false-name attacks.

\section{DSL: Definition}
\label{sec:motivating_example}

When defining DSL strategies, we take the perspective of a single agent $i$ facing uncertainty. The agent has a utility function $u_i$ that determines her utility given the state of the world, which is comprised of her own action $a_i$, others' actions $a_{-i}$, and agent $i$'s type $\theta_i$. Formally, $u_i(a_i, a_{-i} | \theta_i)$. We denote by $A_i$ the set of all agent $i$'s pure actions, and by $\Delta(A_i)$ the set of all agent $i$'s mixed actions. An action $a_i$ may be from either of these action sets depending on the context. For mixed strategies, $u_i(a_i, a_{-i} | \theta_i) = \Ex[a \sim a_i]{u_i(a, a_{-i} | \theta_i)}$. We denote by $\Theta_i$ the set of all agent $i$'s types. 

\begin{definition} 
\label{def:loss_aversion}

We say that an action $a_i$ of agent $i$ is \textbf{DSL} (given a type $\theta_i$) if for any other action $a'_i$, over the set of outcomes where agent $i$'s utility differs between the actions, the minimal utility attained using $a_i$ is at least as good as that attained by $a'_i$. Formally, let 
$$D_{\theta_i}(a_i, a'_i) = \{a_{-i} \text{\quad s.t. \quad} u_i(a_i, a_{-i} | \theta_i ) \neq u_i(a'_i, a_{-i} | \theta_i)\}.$$ Then, an action $a_i$ is pure/mixed DSL if $\forall a'_i \in A_i$,\footnote{Or, in the mixed case: $\forall a'_i \in \Delta(A_i)$}

$$\min_{\substack{a_{-i} \in D_{\theta_i}(a_i, a'_i)}} u_i(a_i, a_{-i} | \theta_i) \geq \min_{\substack{a_{-i} \in D_{\theta_i}(a_i, a'_i)}} u_i(a'_i, a_{-i} | \theta_i).\footnote{We use the term minimum loosely: When taken over infinite sets that do not have a minimum the definition uses the infimum.}$$

We say that a \textit{strategy} $s_i:\Theta_i \rightarrow A_i$ is pure DSL if it maps any type $\theta_i$ to a corresponding DSL pure action $a_i$. We say that $s_i:\Theta_i \rightarrow \Delta(A_i)$ is mixed DSL if it maps any type $\theta_i$ to a corresponding DSL mixed action $a_i$.

\end{definition}

Notice that in our definition we compare pure strategies only with other pure strategies, i.e., they are DSL with respect to this strategy set. Mixed strategies are DSL w.r.t. all strategies (mixed and pure). We use the term ``nature state'' to mean the actions $a_{-i}$, which may result from either uncertainty over others' types or over their strategic choice: What matters to the agent in the end is what are all of their possible actions. There is seemingly some loss of generality in that we assume that all possible $a_{-i}$ are fixed vectors of actions, and not more generally random variables over actions. But, as we show in the full version of our paper, 
allowing for the latter loses the usefulness of the DSL notion.

\section{Relations to Prominent Game-theoretic Solution Concepts}
\label{SEC:OTHER_CONCEPTS}

Note: Missing proofs in this section appear in the full version of our paper
. For completeness, we state the connection of DSL to safety level and what we call Multi-Leximin strategies, although these claims are already established in the literature characterizing the notion of discrimin (see, e.g., \cite{discrimin}). 

\subsection{Dominant Strategy}
\begin{definition}
A weakly dominant action $a_i$ satisfies that for any other action $a'_i$:

(1) For any 
nature state $a_{-i}$, 
$$u_i(a_i, a_{-i} | \theta_i) \geq u_i(a'_i, a_{-i} | \theta_i),$$

and (2) there is such nature state $a_{-i}$ so that the above inequality is strict. 

A weakly dominant strategy is such that maps types to weakly dominant actions. 
\end{definition}

The following result is natural:

\begin{restatable}[]{lemma}{dominantDSL}
\label{lem:dominant_loss_averse}
Every weakly dominant strategy is DSL. 
\end{restatable}

\subsection{Safety Level Strategy and Individual Rationality}

Safety level strategies in non-cooperative games are such strategies that yield a best possible guarantee of utility for a player, without the need to reason about the types or strategies chosen by other players. The example of \cite{aumann1985non} makes a compelling argument for choosing such strategies: There are games where the Nash Equilibrium does not guarantee more than the safety level. In such cases, choosing the equilibrium strategy runs the unnecessary risk of a lower outcome. \cite{tennenholtz2002competitive} extends this insight and shows a class of games where the safety level strategy guarantees a large constant fraction of the Nash equilibrium outcome, without its involved risks. 

Individual rationality is a common requirement in game theory analysis (see, e.g., \cite{roughgarden2010algorithmic}), that requires either that an agent does not participate in a game where it gains negative utility, or that it does not choose a strategy that may yield negative outcomes. We define:
\begin{definition}
A \textit{safety level strategy} $s_i$ \cite{aumann1985non} is a strategy (mixed or pure) of player $i$ such that for any type $\theta_i$ it chooses an action $a_i$ so that for any nature state $a_{-i}$ of the other agents, $u_i(a_i, a_{-i} | \theta_i) \geq \max_{a} \min_{a_{-i}} u_i(a, a_{-i} | \theta_i)$. I.e., the strategy guarantees the safety level $L \stackrel{def}{=} \max_{a'} \min_{a'_{-i}} u_i(a', a'_{-i} | \theta_i)$. 

\textit{Individual Rationality} of a strategy $s_i$ of player $i$ satisfies that for any type $\theta_i$ it chooses an action $a_i$ so that for any nature state $a_{-i}$ of the other agents, $u_i(a_i, a_{-i} | \theta_i) \geq 0$. I.e., the strategy guarantees a non-negative utility for the player. 
\end{definition}

The two notions are quite similar, as individual rationality can be seen as a minimal safety level requirement; 
in auctions they are in fact equivalent to a third notion of no over-bidding, under some reasonable conditions (the auction does not charge payments from non-winners, and never charges a winner more than her declared value). We claim:

\begin{restatable}[]{proposition}{DSLsafety}
\label{prop:loss_averse_safety}
A DSL strategy is a safety level strategy, but not necessarily vice-versa.
\end{restatable}

\begin{corollary}
\label{cor:loss_averse_existence}
When there is a finite amount of safety level strategies, and a finite amount of nature states, a DSL strategy is guaranteed to exist.
\end{corollary}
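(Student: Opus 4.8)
The plan is to reduce the existence of a loss-averse \emph{strategy} to the existence of a loss-averse \emph{action} for each fixed type, and then to produce such an action explicitly as a leximin-maximizer over the (finite) set of safety-level actions. Since by Definition~\ref{def:loss_aversion} a strategy is loss-averse exactly when it maps every type to a loss-averse action, it suffices to fix a type $\theta_i$ and exhibit one loss-averse action $a^*$; the desired strategy then sets $s_i(\theta_i) = a^*$.

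Fix $\theta_i$, let $L$ be its safety level, and let $S \subseteq A_i$ be the set of safety-level actions, i.e.\ those $a$ with $\min_{a_{-i}} u_i(a, a_{-i}\mid\theta_i) \ge L$. By Proposition~\ref{prop:loss_averse_safety} every loss-averse action lies in $S$, so it is enough to search within $S$. Moreover $S$ is finite: each element of $S$ is the $\theta_i$-action of some safety-level strategy (extend it arbitrarily by safety-level actions on other types), and there are finitely many such strategies by hypothesis; $S$ is nonempty since a safety-level strategy always exists. A first easy observation narrows the comparisons we must control: if $a \in S$ and $a' \notin S$, then $a$ is automatically loss-averse w.r.t.\ $a'$. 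Indeed $a'$ has some nature state with $u_i(a', a_{-i}\mid\theta_i) < L$; this state lies in $D_{\theta_i}(a,a')$ (the utilities differ, as $a$ guarantees $\ge L$), so
\[
\min_{a_{-i}\in D_{\theta_i}(a,a')} u_i(a',a_{-i}\mid\theta_i) < L \le \min_{a_{-i}\in D_{\theta_i}(a,a')} u_i(a,a_{-i}\mid\theta_i).
\]
Hence it remains only to find an $a^* \in S$ that is loss-averse w.r.t.\ every other element of $S$.

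The main step is to take $a^*$ to be a leximin-maximal action of $S$: for each $a\in S$ form the vector of its utilities over the (finitely many) nature states sorted in increasing order, and let $a^*$ maximize this vector lexicographically; such $a^*$ exists because $S$ and the nature-state set are finite. I claim $a^*$ is loss-averse w.r.t.\ every $a'\in S$. Suppose not, so $\min_{D} u_i(a^*,\cdot\mid\theta_i) < \min_{D} u_i(a',\cdot\mid\theta_i)$ on $D = D_{\theta_i}(a^*,a')$, and set $t = \min_{D} u_i(a^*,\cdot\mid\theta_i)$. Outside $D$ the two actions have identical utilities; inside $D$ every value of $a'$ exceeds $t$, while $a^*$ attains $t$ at least once. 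Comparing the sorted-ascending vectors, they agree on all entries below $t$ and on the common block of entries equal to $t$, but at the first position beyond that block $a^*$ still reads $t$ (from its $D$-state achieving the minimum) whereas $a'$ already reads a value $>t$. Thus $a'$ strictly leximin-dominates $a^*$, contradicting maximality. Combined with the previous paragraph, $a^*$ is loss-averse w.r.t.\ all of $A_i$, completing the construction.

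The step I expect to be the crux is precisely this last comparison. The loss-aversion relation restricted to $S$ is complete (any two actions are comparable, since we compare two real minima over the same set $D$), but it need not be transitive: the set $D_{\theta_i}(a,a')$ changes with the pair, so one could a priori have cycles with no dominator, as in a rock--paper--scissors tournament. Completeness alone therefore does not furnish a maximal element, and the role of the leximin refinement is exactly to break such potential cycles and guarantee a single action that dominates all of $S$ in the loss-aversion order.
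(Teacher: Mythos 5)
Your proof is correct and follows essentially the same route as the paper: it constructs a multi-leximin action over the finite set of safety-level actions (the paper's Lemma~\ref{lem:leximin_existence}) and then shows that a multi-leximin action is loss-averse via the same sorted-utility-vector comparison (the paper's Lemma~\ref{lem:leximin_loss_averse}), which is exactly how the paper derives the corollary. The only difference is that you inline the two lemmas rather than citing them, and you add the (correct) observation that the leximin refinement is needed because the pairwise loss-aversion relation, while complete on the safety-level actions, need not be transitive.
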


The corollary is a result of Lemma~\ref{lem:leximin_loss_averse} and Lemma~\ref{lem:leximin_existence}. We prove both during our discussion of the lexicographic max-min in the next subsection. 


\subsection{Lexicographic Max-min}

A very interesting comparison is with another robust solution notion, the lexicographic max-min (also commonly known as leximin). The leximin is especially prevalent in the fair allocation literature, see, e.g., \cite{moulin2004fair}. We consider two possible ways to define it:

\begin{definition}
\textit{Leximin} - Let $U_{a_i}$ be the \textbf{set} of all possible utility outcomes of the action $a_i$ by agent $i$, ordered from small to large, and let $U_{a_i}[j]$ be the $j$ element of $U_{a_i}$ in this ordering. An action $a_i$ lexicographically weakly dominates (LD) another action $a'_i$ if $\min U_{a_i} > \min U_{a'_i}$, or $\min U_{a_i} = \min U_{a'_i}$ and $U_{a_i} \setminus \min U_{a_i}$ LDs $U_{a'_i} \setminus \min U_{a'_i}$ (a recursive definition)
. We call an action that LDs all other actions a leximin. A strategy is leximin if it maps all types to leximin actions. 

\textit{Multi-Leximin} - Let $U_{a_i}$ be the \textbf{multiset} of all possible utility outcomes of the action $a_i$ by agent $i$, ordered from small to large. The rest of the definition follows similarly, where importantly in the recursive definition we remove only \textit{one} copy of the minimum element at each step. 

\end{definition}

Note that the (Multi-)leximin notions are only clearly defined when there is a finite amount of nature states $a_{-i}$, otherwise the recursive definition of LD may not terminate. 

We first note that both definitions give stronger notions than safety level strategies. 

\begin{restatable}[]{lemma}{LeximinSafety}
\label{lem:leximin_safety_level}
(Multi-)leximin is a safety level strategy, but not necessarily vice-versa. 
\end{restatable}

Despite some similarity in the definition with \text{DSL}, the notion of leximin does not have a special relationship with it: neither implies the other. We demonstrate it using the discrete first-price auction in Example~\ref{ex:leximin_not_loss_aversion} Appendix~\ref{app:application_examples}. 

The notion of multi-leximin is much more closely related to the \text{DSL} notion. In fact, it is a stronger notion:
\begin{restatable}[]{lemma}{leximinDSL}
\label{lem:leximin_loss_averse}
Multi-leximin is a DSL strategy, but not necessarily vice-versa. 
\end{restatable}

\begin{restatable}[]{lemma}{LeximinExistence}
\label{lem:leximin_existence}
When there is a finite amount of safety level strategies, 
multi-leximin is guaranteed to exist.
\end{restatable}


An important advantage of the \text{DSL} definition is that it naturally extends to settings with continuous outcomes. It is not clear how to extend the leximin definition to such cases. Thus, one possible way of thinking about the \text{DSL} notion is that it is a somewhat weaker notion of multi-leximin, that can be used in continuous settings, as well as discrete ones.




\section{Main Result: Application to VCG under False-name Attacks}
\label{sec:vcg_fnp}

We now move on to present our main result and through it the usefulness of the DSL notion. 
False-name attacks by an agent $i$ in a combinatorial auction are where instead of sending one combinatorial bid, the agent sends multiple combinatorial bids (a vector $\mathbf{b_i}$ rather than a single bid $b_i$). The agent then gets all the items allocated to the ``agents'' (which we call Sybil agents or Sybil bids) $1 \leq j \leq |\mathbf{b_i}|$, and pays the sum of all their payments. Before formally introducing the VCG notations, we note three complexities that are present in our notations: (1) We consider both the notion of \text{DSL} strategies (which has the single agent perspective vs. nature states) and social welfare (which accounts for $n$ different agents). (2) We consider welfare for the real $n$ underlying agents of the auction, but since each may use Sybil identities, the VCG allocations are in terms of the Sybil identities. We allow for both by using sub-indexing. (3) Similar to the case of the first-price auction, discretization of the bid space is essential to the result (a counter-example for continuous VCG appears in the full version of our paper
. To further simplify the proof, we also assume that the valuation space is discrete, though this assumption can be removed. We allow more granularity to the bid space: valuations are on an $\epsilon$ grid, while bids are on an $\frac{\epsilon}{2|M|!}$ grid. 

\begin{definition}
$Grid(\epsilon) = \{\epsilon k\}_{k\in \mathcal{N}} = \{ 0, \epsilon, 2\epsilon, \ldots\}$. 
A \textit{combinatorial bid} $b \in B$ over an item set $M$ is a function $b:P(M) \rightarrow Grid(\frac{\epsilon}{2|M|!})$ from the power set of all subsets of $M$ to a non-negative bid value. A combinatorial valuation $v$ is similarly $v:P(M) \rightarrow Grid(\epsilon)$. With the possibility of Sybil attacks, an agent $i$ with valuation (type) $v_i$ sends a vector of bids (action) $\mathbf{b_i} \in B^*$ (i.e., any amount of combinatorial bids), and faces a nature state $\mathbf{b_{-i}} \in B^*$. 

Let $\eta_i = |\mathbf{b_i}|, \eta_{-i} = |\mathbf{b_{-i}}|$ be the number of (Sybil) agents in each vector. An allocation $\mathbf{\alpha}^S(\mathbf{b_i}, \mathbf{b_{-i}})$ maps the bid vectors into a partition of $S$ into subsets. We allow indexing $\alpha_1, \ldots, \alpha_n$ to mean the union of items allocated to the Sybil identities of each real agent, as well as sub-indexing $\alpha_{i_1}, \ldots, \alpha_{i_{\eta_i}}$ to mean the items allocated to a specific Sybil identity of agent $i$. 
We denote $SW^{Obs}_{\mathbf{\alpha}} = \sum_{i=1}^n \sum_{j=1}^{\eta_i} b_{i_j}(\alpha_{i_j}), SW^{Real}_{\mathbf{\alpha}} = \sum_{i=1}^n v_i(\alpha_i)$ for the observed social welfare of an allocation as specified in the (possibly Sybil) bids, and the real social welfare of the agents, respectively. We denote $truth_i = v_i$ for the truthful bid.

The \textit{VCG} combinatorial auction is the pair of allocation rule $$\mathbf{\alpha}^M(\mathbf{b_i}, \mathbf{b_{-i}}) = \argmax_{\mathbf{\tilde{a}^M(\mathbf{b_i},\mathbf{b_{-i}})}} (SW^{Obs}_{\mathbf{\tilde{a}^M(\mathbf{b_i},\mathbf{b_{-i}})}}),$$ and the payment rule $$p^M_{i_j}(\mathbf{b_i}, \mathbf{b_{-i}}) = SW^{Obs}_{\mathbf{\alpha}^M}(\mathbf{b_i}, \mathbf{b_{-i}}) - SW^{Obs}_{\mathbf{\alpha}^{M \setminus a^M_{i_j}}}(\mathbf{b_i}, \mathbf{b_{-i}}).$$ 
Finally, the utility of agent $i$ is $u_i(\mathbf{b_i}, \mathbf{b_{-i}} | v_i) = v_i\left(\bigcup\limits_{1 \leq j \leq \eta_i} a^M_{i_j}(\mathbf{b_i}, \mathbf{b_{-i}})\right) - \sum\limits_{j=1}^{\eta_i} p^M_{i_j}(\mathbf{b_i}, \mathbf{b_{-i}})$. 
\end{definition}

\begin{theorem}
\label{thm:vcg}
When all bidders play DSL strategies, discrete VCG achieves optimal welfare, even under the possibility of false-name attacks and with general valuations. 
\end{theorem}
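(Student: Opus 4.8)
\textbf{The plan.} The plan is to reduce the welfare question to a statement about each agent's \emph{effective valuation} and then to show that loss aversion pins this effective valuation to the true one up to the bid-grid resolution. For a bid vector $\mathbf{b_i}$, define the effective valuation $\hat v_i(S) = \max\{\sum_j b_{i_j}(S_j)\}$ over partitions $\{S_j\}$ of $S$ among $i$'s Sybil identities. Because VCG selects the allocation maximizing observed welfare $SW^{Obs}$, and the argmax always splits each agent's awarded bundle optimally among its own identities, the resulting allocation is exactly the one maximizing $\sum_i \hat v_i(T_i)$ over partitions $\{T_i\}$ of the items; so for the purposes of the \emph{allocation} (though not the payments) the Sybil profile behaves like each agent submitting the single bid $\hat v_i$. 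Hence it suffices to prove that every loss-averse strategy has $\hat v_i(S)$ close to $v_i(S)$ (within a bid step, and with no overbidding beyond grid slack) for every $S \subseteq M$, after which the welfare claim follows from the grid argument in the last step.

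\textbf{No overbidding.} By Proposition~\ref{prop:loss_averse_safety} a loss-averse strategy is a safety-level strategy, and in a combinatorial auction the safety level is $0$, so such a strategy is individually rational: $u_i(\mathbf{b_i},\mathbf{b_{-i}}\mid v_i)\ge 0$ in every nature state. I would use this to rule out effective overbidding: if $\hat v_i(S) > v_i(S)$, I construct a nature state forcing $i$ to win exactly $S$ and to pay almost its whole effective bid, yielding negative utility. This is delicate, because the sum of $i$'s per-identity Clarke payments is \emph{not} the group externality --- precisely the loophole making VCG non-false-name-proof (Example~\ref{ex:XOS_bad_welfare}). The fix is that nature, seeing $\mathbf{b_i}$, places competing bids just below each winning identity's bid on its own sub-bundle $S_j$; then identity $i_j$'s pivot price is almost $b_{i_j}(S_j)$, the payments sum to almost $\hat v_i(S)$, and the utility turns negative.

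\textbf{No underbidding.} Here I compare $\mathbf{b_i}$ against the truthful single bid $truth_i=v_i$, replaying the arguments of Lemma~\ref{lem:first_price} and Lemma~\ref{lem:discrete_first_price}. If $\hat v_i(S)$ underbids $v_i(S)$ by more than a grid step on some $S$, I exhibit nature states in which the truthful bid wins $S$ at a strictly positive margin while $\mathbf{b_i}$ fails to win it (so the utilities differ), while every differing state in which $\mathbf{b_i}$ wins nothing is matched by $truth_i$ gaining at least as much; this makes the minimum of $u_i(truth_i,\cdot)$ over $D_{v_i}(\mathbf{b_i},truth_i)$ strictly exceed that of $u_i(\mathbf{b_i},\cdot)$, so $\mathbf{b_i}$ is not loss-averse. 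As in the discrete first-price auction, discretization is what makes a maximal safe (loss-averse) effective bid well defined, forcing $\hat v_i(S)$ to the largest grid value consistent with individual rationality, within one bid step of $v_i(S)$.

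\textbf{Conclusion and main obstacle.} Combining the two directions, $\hat v_i$ agrees with $v_i$ on every bundle up to the bid resolution $\tfrac{\epsilon}{2|M|!}$. Since any allocation awards nonempty bundles to at most $|M|$ identities, the accumulated discrepancy between $SW^{Obs}$ and $SW^{Real}$ on any fixed allocation stays below $\tfrac{\epsilon}{2}$; as true valuations lie on the coarser $\epsilon$-grid, two allocations of different real welfare differ by at least $\epsilon$, so the $SW^{Obs}$-maximizing allocation must also maximize $SW^{Real}$, giving optimal welfare. I expect the no-overbidding step to be the main obstacle: one must build adversarial nature states that defeat \emph{arbitrary} Sybil splits and control the summed Clarke payments, and one must verify that the grid factor $\tfrac{1}{2|M|!}$ is fine enough to absorb every rounding and payment-slack term so that the observed optimum can never overtake the true optimum.
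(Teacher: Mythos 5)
Your decomposition is essentially the paper's: the effective valuation $\hat v_i(S)=\max_{\alpha^S(\mathbf{b_i})}\sum_j b_{i_j}(\alpha_{i_j})$ is exactly the quantity the paper uses to split all Sybil strategies into overbidding ($\hat v_i(S)>v_i(S)$ for some $S$, killed by individual rationality via Proposition~\ref{prop:loss_averse_safety}), underbidding (killed by comparison with $truth_i$), and exact-bidding (which yields optimal welfare via the chain $SW^{Real}_{\alpha_T}\le SW^{Obs}_{\alpha_T}\le SW^{Obs}_{\alpha_F}\le SW^{Real}_{\alpha_F}$). However, there is one genuine gap: you never establish that a loss-averse strategy \emph{exists}. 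Showing that every loss-averse strategy is exact-bidding, and that exact-bidding implies optimal welfare, leaves the theorem vacuous if the set of loss-averse strategies is empty --- and Lemma~\ref{lem:first_price} shows this emptiness actually happens in the continuous first-price auction, so it cannot be waved away. The paper closes this with Claim~\ref{clm:truth_loss_averse_sybil}, proving that $truth_i$ is loss-averse; this requires a separate, nontrivial case analysis showing $truth_i$ is loss-averse with respect to every \emph{exact-bidding} Sybil attack (the over- and underbidding cases alone do not suffice, since loss aversion must hold against \emph{all} alternatives, including the other exact bids). Your plan needs this step.

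Two smaller points on the steps you do sketch. In the underbidding case, the technical heart of the paper's argument is not the direction you emphasize but its converse: one must show that in \emph{every} nature state where $truth_i$ gets zero utility, the Sybil attack also gets exactly zero (the paper does this by computing that the Sybil payments then sum to exactly $v_i(S)$, using no-overbidding); only then is the minimum of $u_i(truth_i,\cdot)$ over the differing states bounded below by the grid step. In the overbidding case, your construction (competitors one grid step below each identity's bid on its sub-bundle) can fail to produce negative utility when the overbid margin is a single grid step but there are $\eta_i\ge 2$ identities, since the total payment then falls short of $\hat v_i(S)$ by $\eta_i$ grid steps; the paper instead uses a single additive competitor whose total bid on $S$ is an average $\tilde b$ strictly between $v_i(S)$ and $\hat v_i(S)$, which forces the aggregate payment above $v_i(S)$ directly.
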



\begin{proof}
Our proof follows the following structure: First, we define overbidding Sybil attacks and show that they are not DSL. We then define underbidding attacks and show that they are not DSL. 
For any of the remaining attacks, which we call exact-bidding (bidding truthfully is also exact-bidding, but not exclusively so), we show that even though they are not necessarily truthful, they yield maximal welfare. However, this still does not guarantee that one of the remaining strategies is in fact DSL. For this purpose, we show that there exists a DSL strategy: being truthful\footnote{Another, albeit non-constructive method to show there exists a DSL strategy is by showing the finiteness of undominated exact-bidding Sybil attacks, and then use Corollary~\ref{cor:loss_averse_existence}}. 


First, we show that if the Sybil bids are overbidding $v$ (in a sense that will be immediately defined), then, similarly to our proof for the first-price auction (see Appendix~\ref{app:application_examples}), it is not safety level and thus not DSL. This requires slightly more care since the bids are combinatorial and there are several Sybil bids. 
We say that 
$\mathbf{b_i} = (b_{i_1}, \ldots,  b_{i_{\eta_i}})$ is overbidding if there is a set $S$ and an allocation $\mathbf{\alpha}^S(\mathbf{b_i})$ so that $\sum_{j=1}^{\eta_i} b_{i_j}(\alpha_{i_j}^S(\mathbf{b_i})) > v_i(S)$. 

\begin{claim}
\label{clm:overbidding}
Overbidding $\implies$ not DSL.
\end{claim}

\begin{proof}
Let us choose a maximizing allocation $\mathbf{\alpha}^S(\mathbf{b_i})$ for $S$. 

We denote $\bar{b} = \max_{1\leq j\leq \eta_i} \max_{S' \subseteq M} b_{i_j}(S') + v_i(S')$ for a number high enough that if some other agent bids it for any subset of $M$, both the truthful bid $v_i$ or the Sybil attack $\mathbf{b_i}$ will lose that subset. We will use it in our construction of nature states. 
By the overbidding condition, we can take the average $\tilde{b} = \frac{v_i(S)}{2} + \frac{1}{2}\sum_{j=1}^{\eta_i} b_{i_j}(a_{i_j}^S)$, so that $\sum_{j=1}^{\eta_i} b_{i_j}(\alpha_{i_j}^S) > \tilde{b} > v_i(S)$. Consider a nature state where the false-name attacker faces exactly one additive bidder $b'$ that has for any good $g \in M \setminus S$, $b'(g) = \bar{b}$, and for any good $g\in S$, $b'(g) = \frac{\tilde{b}}{|S|}$. The optimal observed welfare allocation is to allocate all goods in $M\setminus S$ to $b'$, and allocate the set $S$ as in $\mathbf{\alpha_i}(S)$
. The payment of bidder $b_i$ must be at least $b'(S) = \tilde{b} > v_i(S)$. Therefore, the attacker has negative utility in this case, while truthfulness is individually rational: i.e., it is not a safety level strategy and so also not DSL.  
\end{proof}

We say that $b_{i_1}, b_{i_{\eta_i}}$ are underbidding if there is a set $S$ so that for any allocation $\mathbf{\alpha}^S \stackrel{def}{=} \mathbf{\alpha}^S(\mathbf{b_i})$ so that $\sum_{j=1}^{\eta_i} b_{i_j}(\alpha_{i_j}^S) < v_i(S)$. 

\begin{claim}
\label{clm:underbidding}
Underbidding $\implies$ not DSL.
\end{claim}

\begin{proof} Let $\tilde{b} = \frac{1}{2}\sum_{j=1}^{\eta_i} b_{i_j}(\alpha_{i_j}^S) + \frac{v_i(S)}{2}$, then
$$\sum_{j=1}^{\eta_i} b_{i_j}(\alpha_{i_j}^S(\mathbf{b_i})) < \tilde{b} < v_i(S).$$

Let $b'$ be constructed as in the overbidding case. The allocation $\mathbf{\alpha}^M(\mathbf{b_i}, b')$ allocates no items to the Sybil bidders of agent $i$. However, the allocation given agent $i$ bids truthfully $\mathbf{\alpha}^M(truth_i, b')$, allocates the set $S$ to her with payment $\tilde{b}$, which yields agent $i$ a positive utility $v_i(S) - \tilde{b}$. This yields 
$$\min_{\mathbf{b_{-i}} \in D_{v_i}(\mathbf{b_i}, truth_i)} u_i(\mathbf{b_i}, \mathbf{b_{-i}} | v_i) = 0.$$
On the other hand, we claim that since we know DSL strategies are not overbidding, there are no nature states for which an underbidding Sybil attack gets positive utility while bidding truthfully gets $0$ utility. Assume towards contradiction $truth_i$ gets $0$ utility. It then either does not win any item, or wins some set $S$ and pays $v_i(S)$ for it. Let $S$ be the set that the Sybil bidders win to gain positive utility. As there is no overbidding, this set can be won by $truth_i$ as well (in the respective maximizing allocation)\footnote{In full generality, $truth_i$ may win a set $s$ that has partial intersection with $S$. The analysis of this case is essentially the same, and stems from the fact that the alternative value for the items forces zero utility on the truthful agent.}. Then, 
\begin{align*}
    & SW^{Obs}_{\alpha^{M \setminus S}}(\mathbf{b_i}, \mathbf{b_{-i}})  && \text{(No overbidding)}\\ 
    & = SW^{Obs}_{\alpha^{M \setminus S}}(truth_i, \mathbf{b_{-i}})  && \text{($i$ wins only $S$)}\\ 
    & = SW^{Obs}_{\alpha^{M \setminus S}}(\mathbf{b_{-i}}) && \text{($i$'s truthful payment)}\\
    & = SW^{Obs}_{\alpha^{M}}(\mathbf{b_{-i}}) - v_i(S) && \text{(More bids)}\\
    & \leq SW^{Obs}_{\alpha^{M}}(\mathbf{b_i},\mathbf{b_{-i}}) - v_i(S)
\end{align*}

So 
\begin{equation}
\label{eq:value_of_S}
v_i(S) \leq SW^{Obs}_{\alpha^{M}(\mathbf{b_i},\mathbf{b_{-i}})} - SW^{Obs}_{\alpha^{M \setminus S}(\mathbf{b_i}, \mathbf{b_{-i}})} \end{equation}

Since our choice of $S$ assumes the Sybil bids win exactly it, we have
$$SW^{Obs}_{\alpha^{M \setminus S}(\mathbf{b_i}, \mathbf{b_{-i}})} + SW^{Obs}_{\alpha^{S}(\mathbf{b_i})} = SW^{Obs}_{\alpha^{M}(\mathbf{b_i},\mathbf{b_{-i}})},$$
and so, together with Eq.~\ref{eq:value_of_S},

$$v_i(S) \leq SW^{Obs}_{\alpha^{M}(\mathbf{b_i},\mathbf{b_{-i}})} - SW^{Obs}_{\alpha^{M \setminus S}(\mathbf{b_i}, \mathbf{b_{-i}})} = SW^{Obs}_{\alpha^{S}(\mathbf{b_i})}.$$

Since there is no overbidding, $SW^{Obs}_{\alpha^{S}(\mathbf{b_i})} = v_i(S)$. 

We now show that any Sybil bidder $j$ pays $v_i(S) - \sum_{1 \leq t \neq j \leq \eta_i} b_{i_j}(\alpha_{i_j})$. 
Since $S$ is allocated to the Sybil bidders and $M\setminus S$ to others, 
\begin{equation}
\label{eq:payment_underbidding}
\begin{split}
    & SW^{Obs}_{\mathbf{\alpha}^{M \setminus a^M_{i_j}}}(\mathbf{b_i}, \mathbf{b_{-i}}) =  \\
    & SW^{Obs}_{\mathbf{\alpha}^{M \setminus S}}(\mathbf{b_i}, \mathbf{b_{-i}}) + SW^{Obs}_{\mathbf{\alpha}^{S \setminus a^M_{i_j}}}(\mathbf{b_i}, \mathbf{b_{-i}})
\end{split}
\end{equation}

Then,
\begin{align*}
    & p^M_{i_j} = SW^{Obs}_{\mathbf{\alpha}^M}(\mathbf{b_i}, \mathbf{b_{-i}}) - SW^{Obs}_{\mathbf{\alpha}^{M \setminus a^M_{i_j}}}(\mathbf{b_i}, \mathbf{b_{-i}}) && 
    \\ 
    & = SW^{Obs}_{\mathbf{\alpha}^M}(\mathbf{b_i}, \mathbf{b_{-i}}) - SW^{Obs}_{\mathbf{\alpha}^{M \setminus S}}(\mathbf{b_i}, \mathbf{b_{-i}})\\
    & \qquad \qquad \qquad \qquad - SW^{Obs}_{\mathbf{\alpha}^{S \setminus a^M_{i_j}}}(\mathbf{b_i}, \mathbf{b_{-i}}) &&  \\
    & = v_i(S) - SW^{Obs}_{\mathbf{\alpha}^{S \setminus a^M_{i_j}}}(\mathbf{b_i}, \mathbf{b_{-i}}) \\
    & = v_i(S) - \sum_{j=1}^{\eta_i} b_{i_j}(\alpha_{i_j})
    \end{align*}

The total payment of agent $i$ is then \[
\begin{split}
& \sum_{j=1}^{\eta_i} p^M_{i_j} = \sum_{j=1}^{\eta_i} v_i(S) - \sum_{1 \leq t \neq j \leq \eta_i} b_{i_j}(\alpha_{i_j}) = \\
& \eta_i \cdot v_i(S) - (\eta_i - 1) \sum_{j=1}^{\eta_i} b_{i_j}(\alpha_{i_j}) = v_i(S).
\end{split}
\]
This concludes that whenever the utility of $truth_i$ is $0$, then the utility for the Sybil attack is $0$ as well. 


In any other case, the utility of $truth_i$ must be strictly positive, and since the bids are discrete the minimum over all these cases satisfies 
$$\min_{\mathbf{b_{-i}} \in D_{v_i}(\mathbf{b_i}, truth_i)} u_i(truth_i, \mathbf{b_{-i}} | v_i) \geq \frac{1}{2|M|!}.$$
Therefore, underbidding is not DSL. 
\end{proof}

We consider \textit{exact-bidding} such Sybil bids that have for any set of items $S$, $\max_{\mathbf{\alpha}^S(\mathbf{b_i})} \sum_{j=1}^{\eta_i} b_{i_j}(\alpha_{i_j}(S)) = v_i(S)$. These are exactly all the Sybil attacks that are neither overbidding nor underbidding. $truth_i$ is also exact-bidding.
\begin{claim}
\label{clm:exact-bidding}
Exact-bidding $\implies$ optimal welfare.
\end{claim}

\begin{proof}Consider an allocation $\mathbf{\alpha}_F \stackrel{def}{=} \mathbf{\alpha}^M(\mathbf{b_i},\mathbf{b_{-i}})$ attained when all players choose an exact-bidding attack, vs $\mathbf{\alpha}_T \stackrel{def}{=} \mathbf{\alpha}^M(truth_i,\mathbf{truth_{-i}})$. We have
\[
\begin{split}
    & SW_{\mathbf{\alpha}_T}^{Real} \leq \qquad \text{(Truthful)} \\
    & SW_{\mathbf{\alpha}_T}^{Obs} \leq \qquad \text{(No underbidding)} \\
    & SW_{\mathbf{\alpha}_F}^{Obs} \leq \qquad \text{(No overbidding)} \\
    & SW_{\mathbf{\alpha}_F}^{Real}
\end{split}
\]

In words, since there is no underbidding in the Sybil attack, if we take the set allocated to each agent $i$ under the allocation that maximizes welfare under truthfulness, there are Sybil bidders $i_{j_1}, \ldots, i_{j_k}$ with the same aggregate valuation for it. So, $SW_{\mathbf{\alpha}_F}^{Obs}$ is lower bounded by the optimal truthful welfare. Since there is also no overbidding, whatever allocation is chosen as $\mathbf{\alpha}_F^{Obs}$ is at least as good to each agent $i$ as is declared. 
\end{proof}

\begin{claim}
\label{clm:truth_loss_averse_sybil}
$truth_i$ is DSL.
\end{claim}

\begin{proof}Consider some exact-bidding Sybil attack $\mathbf{b_i}$.

Case 1: There is a set $S$ so that $\forall 1\leq j \leq \eta_i, b_{i_j}(S) < v_i(S)$. Then, by the exact-bidding condition there must be some allocation $\mathbf{\alpha}^S(\mathbf{b_i})$ (with at least two non-empty allocations $\alpha^S_{i_j}$) so that $\max_{j=1}^{\eta_i} b_{i_j}(\alpha^S_{i_j}) < \sum_{j=1}^{\eta_i} b_{i_j}(\alpha^S_{i_j}) = v_i(S)$. Consider the nature state where there is one bid $b'$ so that $b'(\alpha^S_{i_j}) = v_i(\alpha^S_{i_j})$ for any $1\leq j \leq \eta_i$, and the rest of the sets are defined upward-monotonely: They inherit the largest value of a subset. With this nature state, the Sybil attack has utility $0$. On the other hand, $truth_i$ has positive utility of $v_i(S) - \max_{j=1}^{\eta_i} v_i(\alpha_{i_j}) > 0$. Since $truth_i$ is individually rational, it is thus DSL w.r.t. such Sybil attacks. 

Case 2: For every set $S$, there is such $j'$ with $b_{i_{j'}}(S) = v_i(S)$. It must hold by the exact-bidding condition that for any allocation $\mathbf{\alpha}^S(\mathbf{b_i})$, $\sum_{j=1}^{\eta_i} b_{i_j}(\alpha^S_{i_j}) \leq v_i(S) = b_{i_{j'}}(S)$. We may assume that VCG prefers to assign larger bundles when tie-breaking between possible assignments. Then, it must be that any allocation to the Sybil bidders is given to one Sybil bidder as a whole bundle. It is then weakly better to send only $b_{i_{j'}}$ as a single bid instead of $\mathbf{b_i}$. Furthermore, it is then weakly better to send $truth_i$, since truthfulness is dominant for single bid VCG. Since this is true given any nature state, the Sybil attack is weakly dominated by $truth_i$, which implies $truth_i$ is DSL with respect to it. 

This covers all the exact-bidding Sybil attacks. \text{DSL} strategies with respect to overbidding and underbidding attacks are implied by the relevant discussion. Overall this covers all Sybil attacks. 
\end{proof}

\end{proof}

\vspace{-1cm}
\section{Discussion and Future Directions}

In the example of the discrete first-price auction in Section~\ref{sec:motivating_example}, as well as in our main result in Section~\ref{sec:vcg_fnp}, the DSL solution concept leads to optimal results: truthfulness (or near truthfulness), and optimal revenue or welfare. In Appendix~\ref{app:application_examples} we study the first-price auction, and show similar results for its discrete variant. However, for the classic setting of voting, we show in the full version of this paper that this is not the case, and that solutions may have various surprising forms. 

A robust notion missing from our discussion in Section~\ref{SEC:OTHER_CONCEPTS} is min-max regret. We show in Appendix~\ref{app:minmax_regret} it does not imply or is implied by our notion of \text{DSL}, and give further characteristics of it. It is also compared with our notion as part of our discussion of the discrete first-price auction in Appendix~\ref{app:application_examples}. 

In our definition of \text{DSL}, we consider only pure nature states. We justify this choice in Appendix~
C of the full version, by showing that if we consider mixed nature states as well, then the \text{DSL} and safety level notions become one. In Appendix~
D of the full version, we show a possible refinement of our notion of \text{DSL}, and demonstrate why it may be useful. 

A few immediate open questions follow our work:
\begin{itemize}
    \item We find that \text{DSL} is a stronger notion than safety level. In settings previously studied that proved performance guarantees for safety level strategies, do DSL strategies exist? Can they yield better performance guarantees?
    
    \item In the case of single-item auctions, our analysis of the discrete first price auction implies that with DSL bidders, it is possible to achieve optimal welfare and revenue. Does this extend to combinatorial auctions? If so, does it hold even when the discretization must be polynomially bound? 

    
    \item In the presence of partial knowledge or the option to elicitate it (similar to the ideas in \cite{lu2011robust}), what would the DSL action be? This is relevant, for example, when agents arrive sequentially, and so the set of feasible nature states diminishes for later agents.  
\end{itemize}

\section*{Acknowledgements}

Yotam Gafni and Moshe Tennenholtz were supported by the European Research Council (ERC) under the European Union’s Horizon 2020 research and innovation programme (Grant No. 740435). 







\bibliographystyle{eptcs}
\bibliography{ref.bib}

\begin{thebibliography}{10}
\providecommand{\bibitemdeclare}[2]{}
\providecommand{\surnamestart}{}
\providecommand{\surnameend}{}
\providecommand{\urlprefix}{Available at }
\providecommand{\url}[1]{\texttt{#1}}
\providecommand{\href}[2]{\texttt{#2}}
\providecommand{\urlalt}[2]{\href{#1}{#2}}
\providecommand{\doi}[1]{doi:\urlalt{https://doi.org/#1}{#1}}
\providecommand{\eprint}[1]{arXiv:\urlalt{https://arxiv.org/abs/#1}{#1}}
\providecommand{\bibinfo}[2]{#2}

\bibitemdeclare{article}{alkalay2014false}
\bibitem{alkalay2014false}
\bibinfo{author}{Colleen \surnamestart Alkalay-Houlihan\surnameend} \&
  \bibinfo{author}{Adrian \surnamestart Vetta\surnameend}
  (\bibinfo{year}{2014}): \emph{\bibinfo{title}{False-Name Bidding and Economic
  Efficiency in Combinatorial Auctions}}.
\newblock {\slshape \bibinfo{journal}{Proceedings of the AAAI Conference on
  Artificial Intelligence}} \bibinfo{volume}{28}(\bibinfo{number}{1}),
  \doi{10.1609/aaai.v28i1.8828}.

\bibitemdeclare{article}{aumann1985non}
\bibitem{aumann1985non}
\bibinfo{author}{Robert~J \surnamestart Aumann\surnameend}
  (\bibinfo{year}{1985}): \emph{\bibinfo{title}{On the non-transferable utility
  value: A comment on the Roth-Shafer examples}}.
\newblock {\slshape \bibinfo{journal}{Econometrica: Journal of the Econometric
  Society}}, pp. \bibinfo{pages}{667--677}, \doi{10.2307/1911661}.

\bibitemdeclare{book}{brandt2016handbook}
\bibitem{brandt2016handbook}
\bibinfo{author}{Felix \surnamestart Brandt\surnameend},
  \bibinfo{author}{Vincent \surnamestart Conitzer\surnameend},
  \bibinfo{author}{Ulle \surnamestart Endriss\surnameend},
  \bibinfo{author}{J\'{e}r\^{o}me \surnamestart Lang\surnameend} \&
  \bibinfo{author}{Ariel~D. \surnamestart Procaccia\surnameend}
  (\bibinfo{year}{2016}): \emph{\bibinfo{title}{Handbook of Computational
  Social Choice}}, \bibinfo{edition}{1st} edition.
\newblock \bibinfo{publisher}{Cambridge University Press},
  \bibinfo{address}{USA}, \doi{10.1017/CBO9781107446984}.

\bibitemdeclare{article}{christodoulou2008bayesian}
\bibitem{christodoulou2008bayesian}
\bibinfo{author}{George \surnamestart Christodoulou\surnameend},
  \bibinfo{author}{Annam\'{a}ria \surnamestart Kov\'{a}cs\surnameend} \&
  \bibinfo{author}{Michael \surnamestart Schapira\surnameend}
  (\bibinfo{year}{2016}): \emph{\bibinfo{title}{Bayesian Combinatorial
  Auctions}}.
\newblock {\slshape \bibinfo{journal}{Journal of the ACM}}
  \bibinfo{volume}{63}(\bibinfo{number}{2}).
\newblock \urlprefix\url{https://doi.org/10.1145/2835172}.

\bibitemdeclare{article}{chwe1989discrete}
\bibitem{chwe1989discrete}
\bibinfo{author}{Michael Suk-Young \surnamestart Chwe\surnameend}
  (\bibinfo{year}{1989}): \emph{\bibinfo{title}{The discrete bid first
  auction}}.
\newblock {\slshape \bibinfo{journal}{Economics Letters}}
  \bibinfo{volume}{31}(\bibinfo{number}{4}), pp. \bibinfo{pages}{303--306},
  \doi{10.1016/0165-1765(89)90019-0}.

\bibitemdeclare{article}{discriminBoolean}
\bibitem{discriminBoolean}
\bibinfo{author}{Sofie \surnamestart {De Clercq}\surnameend},
  \bibinfo{author}{Steven \surnamestart Schockaert\surnameend},
  \bibinfo{author}{Ann \surnamestart Now{\'e}\surnameend} \&
  \bibinfo{author}{Martine \surnamestart {De Cock}\surnameend}
  (\bibinfo{year}{2018}): \emph{\bibinfo{title}{Modelling incomplete
  information in Boolean games using possibilistic logic}}.
\newblock {\slshape \bibinfo{journal}{International Journal of Approximate
  Reasoning}} \bibinfo{volume}{93}, pp. \bibinfo{pages}{1--23},
  \doi{10.1016/j.ijar.2017.10.017}.

\bibitemdeclare{article}{discrimin}
\bibitem{discrimin}
\bibinfo{author}{Didier \surnamestart Dubois\surnameend} \&
  \bibinfo{author}{Philippe \surnamestart Fortemps\surnameend}
  (\bibinfo{year}{1999}): \emph{\bibinfo{title}{Computing improved optimal
  solutions to max--min flexible constraint satisfaction problems}}.
\newblock {\slshape \bibinfo{journal}{European Journal of Operational
  Research}} \bibinfo{volume}{118}(\bibinfo{number}{1}), pp.
  \bibinfo{pages}{95--126}, \doi{10.1016/S0377-2217(98)00307-5}.

\bibitemdeclare{inproceedings}{feldman2013simultaneous}
\bibitem{feldman2013simultaneous}
\bibinfo{author}{Michal \surnamestart Feldman\surnameend},
  \bibinfo{author}{Hu~\surnamestart Fu\surnameend}, \bibinfo{author}{Nick
  \surnamestart Gravin\surnameend} \& \bibinfo{author}{Brendan \surnamestart
  Lucier\surnameend} (\bibinfo{year}{2013}): \emph{\bibinfo{title}{Simultaneous
  Auctions Are (Almost) Efficient}}.
\newblock In: {\slshape \bibinfo{booktitle}{Proceedings of the Forty-Fifth
  Annual ACM Symposium on Theory of Computing}}, \bibinfo{series}{STOC '13},
  \bibinfo{publisher}{Association for Computing Machinery},
  \bibinfo{address}{New York, NY, USA}, p. \bibinfo{pages}{201–210}.
\newblock \urlprefix\url{https://doi.org/10.1145/2488608.2488634}.

\bibitemdeclare{article}{ferejohn1974paradox}
\bibitem{ferejohn1974paradox}
\bibinfo{author}{John~A. \surnamestart Ferejohn\surnameend} \&
  \bibinfo{author}{Morris~P. \surnamestart Fiorina\surnameend}
  (\bibinfo{year}{1974}): \emph{\bibinfo{title}{The Paradox of Not Voting: A
  Decision Theoretic Analysis}}.
\newblock {\slshape \bibinfo{journal}{American Political Science Review}}
  \bibinfo{volume}{68}(\bibinfo{number}{2}), p. \bibinfo{pages}{525–536},
  \doi{10.2307/1959502}.

\bibitemdeclare{article}{gafni2020vcg}
\bibitem{gafni2020vcg}
\bibinfo{author}{Yotam \surnamestart Gafni\surnameend}, \bibinfo{author}{Ron
  \surnamestart Lavi\surnameend} \& \bibinfo{author}{Moshe \surnamestart
  Tennenholtz\surnameend} (\bibinfo{year}{2020}): \emph{\bibinfo{title}{VCG
  under Sybil (False-Name) Attacks - A Bayesian Analysis}}.
\newblock {\slshape \bibinfo{journal}{Proceedings of the AAAI Conference on
  Artificial Intelligence}} \bibinfo{volume}{34}(\bibinfo{number}{02}), pp.
  \bibinfo{pages}{1966--1973}, \doi{10.1609/aaai.v34i02.5567}.

\bibitemdeclare{inproceedings}{booleanGames}
\bibitem{booleanGames}
\bibinfo{author}{Paul \surnamestart Harrenstein\surnameend},
  \bibinfo{author}{Wiebe \surnamestart van~der Hoek\surnameend},
  \bibinfo{author}{John-Jules \surnamestart Meyer\surnameend} \&
  \bibinfo{author}{Cees \surnamestart Witteveen\surnameend}
  (\bibinfo{year}{2001}): \emph{\bibinfo{title}{Boolean games}}.
\newblock In: {\slshape \bibinfo{booktitle}{Proceedings of the 8th Conference
  on Theoretical Aspects of Rationality and Knowledge}}, pp.
  \bibinfo{pages}{287--298}.

\bibitemdeclare{inproceedings}{iwasaki2010worst}
\bibitem{iwasaki2010worst}
\bibinfo{author}{Atsushi \surnamestart Iwasaki\surnameend},
  \bibinfo{author}{Vincent \surnamestart Conitzer\surnameend},
  \bibinfo{author}{Yoshifusa \surnamestart Omori\surnameend},
  \bibinfo{author}{Yuko \surnamestart Sakurai\surnameend},
  \bibinfo{author}{Taiki \surnamestart Todo\surnameend},
  \bibinfo{author}{Mingyu \surnamestart Guo\surnameend} \&
  \bibinfo{author}{Makoto \surnamestart Yokoo\surnameend}
  (\bibinfo{year}{2010}): \emph{\bibinfo{title}{Worst-Case Efficiency Ratio in
  False-Name-Proof Combinatorial Auction Mechanisms}}.
\newblock In: {\slshape \bibinfo{booktitle}{Proceedings of the 9th
  International Conference on Autonomous Agents and Multiagent Systems: Volume
  1 - Volume 1}}, \bibinfo{series}{AAMAS '10},
  \bibinfo{publisher}{International Foundation for Autonomous Agents and
  Multiagent Systems}, \bibinfo{address}{Richland, SC}, p.
  \bibinfo{pages}{633–640}.

\bibitemdeclare{article}{lehmann2006combinatorial}
\bibitem{lehmann2006combinatorial}
\bibinfo{author}{Benny \surnamestart Lehmann\surnameend},
  \bibinfo{author}{Daniel \surnamestart Lehmann\surnameend} \&
  \bibinfo{author}{Noam \surnamestart Nisan\surnameend} (\bibinfo{year}{2006}):
  \emph{\bibinfo{title}{Combinatorial auctions with decreasing marginal
  utilities}}.
\newblock {\slshape \bibinfo{journal}{Games and Economic Behavior}}
  \bibinfo{volume}{55}(\bibinfo{number}{2}), pp. \bibinfo{pages}{270--296},
  \doi{10.1016/j.geb.2005.02.006}.

\bibitemdeclare{inproceedings}{lu2011robust}
\bibitem{lu2011robust}
\bibinfo{author}{Tyler \surnamestart Lu\surnameend} \& \bibinfo{author}{Craig
  \surnamestart Boutilier\surnameend} (\bibinfo{year}{2011}):
  \emph{\bibinfo{title}{Robust Approximation and Incremental Elicitation in
  Voting Protocols}}.
\newblock In: {\slshape \bibinfo{booktitle}{Proceedings of the Twenty-Second
  International Joint Conference on Artificial Intelligence - Volume One}},
  \bibinfo{series}{IJCAI'11}, \bibinfo{publisher}{AAAI Press}, p.
  \bibinfo{pages}{287–293}, \doi{10.5591/978-1-57735-516-8/IJCAI11-058}.

\bibitemdeclare{inproceedings}{merrill1981strategic}
\bibitem{merrill1981strategic}
\bibinfo{author}{S.~\surnamestart Merrill\surnameend} (\bibinfo{year}{1982}):
  \emph{\bibinfo{title}{Strategic Voting in Multicandidate Elections under
  Uncertainty and under Risk}}.
\newblock In \bibinfo{editor}{Manfred~J. \surnamestart Holler\surnameend},
  editor: {\slshape \bibinfo{booktitle}{Power, Voting, and Voting Power}},
  \bibinfo{publisher}{Physica-Verlag HD}, \bibinfo{address}{Heidelberg}, pp.
  \bibinfo{pages}{179--187}.

\bibitemdeclare{book}{moulin2004fair}
\bibitem{moulin2004fair}
\bibinfo{author}{Hervé \surnamestart Moulin\surnameend}
  (\bibinfo{year}{2003}): \emph{\bibinfo{title}{{Fair Division and Collective
  Welfare}}}.
\newblock \bibinfo{publisher}{The MIT Press}.
\newblock \urlprefix\url{https://doi.org/10.7551/mitpress/2954.001.0001}.

\bibitemdeclare{book}{roughgarden2010algorithmic}
\bibitem{roughgarden2010algorithmic}
\bibinfo{author}{Noam \surnamestart Nisan\surnameend}, \bibinfo{author}{Tim
  \surnamestart Roughgarden\surnameend}, \bibinfo{author}{\'Eva \surnamestart
  Tardos\surnameend} \& \bibinfo{author}{Vijay~V. \surnamestart
  Vazirani\surnameend} (\bibinfo{year}{2007}):
  \emph{\bibinfo{title}{Algorithmic Game Theory}}.
\newblock \bibinfo{publisher}{Cambridge University Press},
  \doi{10.1017/CBO9780511800481}.

\bibitemdeclare{article}{savage1951theory}
\bibitem{savage1951theory}
\bibinfo{author}{L.~J. \surnamestart Savage\surnameend} (\bibinfo{year}{1951}):
  \emph{\bibinfo{title}{The Theory of Statistical Decision}}.
\newblock {\slshape \bibinfo{journal}{Journal of the American Statistical
  Association}} \bibinfo{volume}{46}(\bibinfo{number}{253}), pp.
  \bibinfo{pages}{55--67}, \doi{10.1080/01621459.1951.10500768}.

\bibitemdeclare{book}{leximin}
\bibitem{leximin}
\bibinfo{author}{A.~\surnamestart Sen\surnameend} (\bibinfo{year}{1970}):
  \emph{\bibinfo{title}{Collective Choice and Social Welfare}}.
\newblock \bibinfo{series}{Mathematical economics texts},
  \bibinfo{publisher}{Holden-Day}.
\newblock \urlprefix\url{https://books.google.co.il/books?id=kzq6AAAAIAAJ}.

\bibitemdeclare{inproceedings}{tennenholtz2001rational}
\bibitem{tennenholtz2001rational}
\bibinfo{author}{Moshe \surnamestart Tennenholtz\surnameend}
  (\bibinfo{year}{2001}): \emph{\bibinfo{title}{Rational Competitive
  Analysis}}.
\newblock In: {\slshape \bibinfo{booktitle}{Proceedings of the 17th
  International Joint Conference on Artificial Intelligence - Volume 2}},
  \bibinfo{series}{IJCAI'01}, \bibinfo{publisher}{Morgan Kaufmann Publishers
  Inc.}, \bibinfo{address}{San Francisco, CA, USA}, p.
  \bibinfo{pages}{1067–1072}.

\bibitemdeclare{article}{tennenholtz2002competitive}
\bibitem{tennenholtz2002competitive}
\bibinfo{author}{Moshe \surnamestart Tennenholtz\surnameend}
  (\bibinfo{year}{2002}): \emph{\bibinfo{title}{Competitive safety analysis:
  Robust decision-making in multi-agent systems}}.
\newblock {\slshape \bibinfo{journal}{Journal of Artificial Intelligence
  Research}} \bibinfo{volume}{17}, pp. \bibinfo{pages}{363--378},
  \doi{10.1613/jair.1065}.

\bibitemdeclare{article}{yokoo2004effect}
\bibitem{yokoo2004effect}
\bibinfo{author}{Makoto \surnamestart Yokoo\surnameend}, \bibinfo{author}{Yuko
  \surnamestart Sakurai\surnameend} \& \bibinfo{author}{Shigeo \surnamestart
  Matsubara\surnameend} (\bibinfo{year}{2004}): \emph{\bibinfo{title}{The
  effect of false-name bids in combinatorial auctions: New fraud in Internet
  auctions}}.
\newblock {\slshape \bibinfo{journal}{Games and Economic Behavior}}
  \bibinfo{volume}{46}(\bibinfo{number}{1}), pp. \bibinfo{pages}{174--188},
  \doi{10.1016/S0899-8256(03)00045-9}.

\end{thebibliography}


\appendix

\section{Min-max Regret}
\label{app:minmax_regret}
Another robust solution notion is the min-max regret \cite{savage1951theory}. The notion has many uses in voting: \cite{ferejohn1974paradox} showed it can be used to explain why voters choose to participate in elections, and \cite{merrill1981strategic} used it to ``resolve'' the Gibbard-Satterthwaite impossibility theorem (see, e.g., \cite{brandt2016handbook}), by showing that plurality voting (for example) is truthful under this notion. \cite{lu2011robust} showed how when only partial preferences are known, voting rules can use this notion to decide a winner, and design good elicitation schemes. 

\begin{definition}
\label{def:minmax_regret}
\textit{Regret} for an action $a_i$ and nature state $a_{-i}$ given a type $\theta_i$ is $$Reg(a_i, a_{-i} | \theta_i) = \max_{a'_i} u(a'_i, a_{-i} | \theta_i) - u(a_i, a_{-i} | \theta_i).$$

\textit{Max regret} for an action $a_i$ given a type $\theta_i$ is
$$Reg(a_i, a_{-i} | \theta_i).$$ 
A min-max regret action belongs to 
$$\argmin_{a_i} \max_{a_{-i}, a'_i} u(a'_i, a_{-i} | \theta_i) - u(a_i, a_{-i} | \theta_i).$$ 
\end{definition}

In words, the regret of an action $a_i$ under nature state $a_{-i}$ is the maximal lost utility $u(a'_i, a_{-i} | \theta_i) - u(a_i, a_{-i} | \theta_i)$ of choosing $a_i$ instead of $a'_i$, over all possible actions $a'_i$ (this regret is non-negative, as there is always the option of choosing $a_i$ itself). Max regret is the maximal such regret over all nature states, and the min-max regret action is the action $a_i$ that has minimal max regret.

\begin{proposition}
\label{prop:dominant_maxminreg}
Dominant strategy $\implies$ min-max regret
\end{proposition}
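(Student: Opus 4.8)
The plan is to show that a weakly dominant action has zero max regret, and that this forces it to be a min-max regret action. First I would fix a type $\theta_i$ and let $a_i = s_i(\theta_i)$ be the action prescribed by the weakly dominant strategy. The defining property of weak dominance is that for every alternative action $a'_i$ and every nature state $a_{-i}$, we have $u_i(a_i, a_{-i} \mid \theta_i) \geq u_i(a'_i, a_{-i} \mid \theta_i)$. The key observation is that this inequality, read for a \emph{fixed} $a_{-i}$ and ranging over all $a'_i$, says exactly that $a_i$ attains the maximum utility in that nature state, i.e.\ $u_i(a_i, a_{-i} \mid \theta_i) = \max_{a'_i} u_i(a'_i, a_{-i} \mid \theta_i)$.

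The next step is to plug this into the definition of regret. For any nature state $a_{-i}$,
$$Reg(a_i, a_{-i} \mid \theta_i) = \max_{a'_i} u_i(a'_i, a_{-i} \mid \theta_i) - u_i(a_i, a_{-i} \mid \theta_i) = 0,$$
using the observation above. Taking the maximum over all nature states $a_{-i}$ gives that the max regret of $a_i$ equals $0$. Since regret is always non-negative (the maximizing $a'_i$ can always be taken to be $a_i$ itself, as noted in the text following Definition~\ref{def:minmax_regret}), no action can have negative max regret, so $0$ is the minimum possible value of max regret. Hence $a_i$ minimizes max regret, i.e.\ $a_i$ is a min-max regret action. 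Since this holds for every type $\theta_i$, the weakly dominant strategy $s_i$ maps every type to a min-max regret action, which is exactly the claim.

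I do not expect a genuine obstacle here; the statement is essentially immediate once one notices that weak dominance makes the inner $\max_{a'_i}$ in the regret definition collapse onto $a_i$ itself in every nature state. The one point requiring a little care is the interplay of the two maximizations in $\max_{a_{-i}, a'_i}$ versus treating the max over $a'_i$ pointwise in each $a_{-i}$; I would make sure to phrase the regret as a per–nature-state quantity first and only then maximize over $a_{-i}$, rather than manipulating the joint maximum directly. A secondary subtlety, if one wants the result for mixed actions as well, is that weak dominance over all mixed $a'_i$ still yields the pointwise-maximum property (by linearity of expectation a best mixed response cannot beat the best pure response), so the same argument applies verbatim; but for the pure statement the argument above suffices.
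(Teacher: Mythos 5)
Your proof is correct and follows essentially the same route as the paper's: observe that weak dominance makes the regret vanish in every nature state, so the max regret is $0$, which is the minimum possible since regret is non-negative. Your write-up merely spells out the pointwise collapse of $\max_{a'_i}$ onto $a_i$ more explicitly than the paper does.
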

\begin{proof}
Consider a dominant strategy $s$, fix a type $\theta_i$, and let $a = s(\theta_i)$. For any $a', a_{-i}$, we have that $u(a', a_{-i}) - u(a,a_{-i}) \leq 0$, i.e., the max regret for $a$ is $0$, the minimum possible, and so $a$ is a min-max regret action. Since this holds for all types, $s$ is a min-max regret strategy. 
\end{proof}

\begin{example}
\label{ex:minmaxreg_safety}
\textit{Min-max regret $\centernot \implies$ safety level}

Consider two players, with actions $a,b$, and $A,B$ respectively. Consider $u_1(a,A) = u_1(a,B) = 0, u_1(b,A) = -1, u_1(b,B) = 100$. The max regret of $a$ for player $1$ is 100, and the max regret of $b$ is $1$, and so $b$ is the min-max regret strategy, while $a$ is the unique safety level strategy. 
\end{example}

\section{DSL Strategies: Application to the First-price Auction}
\label{app:application_examples}

\subsection{The First-price Auction}
As an illustrative example, 
we demonstrate the usage of our solution concept using the first price and discrete first price single item auctions. 
Interestingly, we show that in the first-price auction, there are no DSL strategies. However, moving to a discrete setting, we show that in the discrete first-price auction \cite{chwe1989discrete}, there is a unique DSL strategy, which achieves maximal welfare and near maximal revenue. 

\begin{definition}
An agent $i$ has a value (type) $v_i$ for an item. 
The agent's bid $b_i$ (action) and nature states $b_{-i}$ are from the same bid space. 
The auctioneer allocates the item to the highest bidder (either the agent or nature, tie-breaking towards nature) and if the agent wins it receives $v_i - b_i$, and otherwise $0$.

\textit{First-price auction (FPA)}: bid space is $0 \leq b_i \leq v_i$. 

\textit{Discrete first-price auction (DFPA)}: bid space is $b_i \in \{\epsilon \cdot k | \epsilon \cdot k \leq v_i \}_{k\in \mathcal{N}}$. 
\end{definition}

Note that we reformulate the auctions to suits our agent perspective formulation. Moreover, we omit strategies that are not individually rational (in the (discrete) first-price auction, 
overbidding has negative utility in some nature states), which is justified by our later discussion in Proposition~\ref{prop:loss_averse_safety}. 
We also ignore multitude in nature states that does not change the auction outcome. I.e., we only consider the highest bids by others as the nature state, and not the entire bid vector. For the DFPA, we denote $\epsilon_{net}(v_i) = \epsilon \cdot \max_{\substack{\epsilon n \leq v_i}} n$, i.e., the closest possible bid below the agent's value of the item. 

In the first-price auction, the notion of DSL strategies is not of much help:

\begin{lemma}
\label{lem:first_price}
In the first-price auction, there are no DSL bid strategies.
\end{lemma}

\begin{proof}

First, consider some bid $0 \leq b_i < v_i$. Compare it with another bid $b'_i$ that satisfies $b_i < b'_i < v_i$.\footnote{Note that the proof is written for the pure DSL case. However, it immediately generalizes to the mixed case, by adapting ``$b_i < v_i$'' to ``has a positive probability to satisfy $b_i < v_i$'', etc.}  Consider a nature state $b_{-i}$ so that $b_i < b_{-i} < b'_i$. Then, $0 = u_i(b_i, b_{-i} | v_i) \neq u_i(b'_i, b_{-i} | v_i) = v_i - b'_i$. Thus, $$\min\limits_{b_{-i} \in D_{v_i}(b_i, b'_i)} u_i(b_i, b_{-i} | v_i) = 0.$$ 

On the other hand, for the bid $b'_i$ and for some nature state $b_{-i}$, $u_i(b'_i, b_{-i}) = 0$ if and only if $b_{-i} \geq b'_i$. In all such cases, it also holds that $u_i(b_i, b_{-i} | v_i) = 0$. In all other cases, i.e., when $b_{-i} < b'_i$, the utility of the bidder satisfies $u_i(b'_i, b_{-i} | v_i) = v_i - b'_i$. We conclude that \[
\begin{split}
    & \min\limits_{b_{-i} \in D_{v_i}(b_i, b'_i)} u_i(b'_i, b_{-i} | v_i) = v_i - b'_i \\
     > & \min\limits_{b_{-i} \in D_{v_i}(b_i, b'_i)} u_i(b_i, b_{-i} | v_i) = 0,
    \end{split}
    \]
    and the bid strategy $b_i$ is not DSL. 

If $b_i = v_i$, then for any nature state $b_{-i}$, $u_i(b_i, b_{-i} | v_i) = 0$. For some $0 \leq b'_i < b_i$, for any nature state $b_{-i}$ where its utility is non-zero, we have $u_i(b'_i, b_{-i} | v_i) = v_i - b'_i > 0$, and so similarly to before $b_i = v_i$ is not DSL. 
\end{proof}

However, things get more interesting with the DFPA:

\begin{lemma}
\label{lem:discrete_first_price}
In the discrete first-price auction: 

For types that have $\epsilon_{net}(v_i) \neq v_i$, and types with $\epsilon_{net}(v_i) = v_i = 0$, bidding $\epsilon_{net}(v_i)$ is the unique DSL bid.

For types with $\epsilon_{net}(v_i) = v_i \neq 0$, the unique DSL bid is $\epsilon_{net}(v_i) - \epsilon$. 
\end{lemma}

We first give a proof for the pure DSL case.

\begin{proof}
The argument why any other bid strategy is not DSL follows a discretized version of the proof for Lemma~\ref{lem:first_price}. 

Case 1: $\epsilon_{net}(v_i) \neq v_i$

Consider some bid with $0 \leq b'_i < \epsilon_{net}(v_i)$. By the same argument as in the first part of the proof of Lemma~\ref{lem:first_price}, bidding $\epsilon_{net}(v_i)$ is \text{DSL} w.r.t. $b'_i$. Since there are no bids with $\epsilon_{net}(v_i) < b'_i \leq v_i$ by the definition of $\epsilon_{net}$, we conclude that $\epsilon_{net}(v_i)$ is DSL w.r.t. all other bids, i.e., DSL. 

Case 2: $\epsilon_{net}(v_i) = v_i = 0$

The unique safety level bid is to bid $0$, and so by Proposition~\ref{prop:loss_averse_safety} it is also the unique DSL strategy.

Case 3: $\epsilon_{net}(v_i) = v_i \neq 0$ 

Similar to the first case, with the difference that bidding $v_i$ always leads to utility $0$, and so the DSL bid bracket is $v_i - \epsilon$. 
\end{proof}

The following lemma completes the mixed DSL case:
\begin{lemma}
\label{lem:mixed_loss_averse_discrete}
In the discrete first-price auction with mixed strategies, following the unique DSL pure strategy is the unique DSL strategy. 
\end{lemma}
\begin{proof}
We show the proof for case 1 where $\epsilon_{net}(v_i) \neq v_i$. The other cases are done similarly. 

Let $s_i$ be the stated strategy, $v_i$ the valuation (type) and the bid $b = s_i(\theta_i)$. Let $b'$ be some other bid: since $b \neq b'$, the bracket $\epsilon_{net}(v_i)$ has probability $p < 1$ of being the actualized bid. Consider the case $b_{-i}$ where another bidder bids $\epsilon_{net}(v) - \epsilon$, and ties are broken in favor of the other bidder. Then, $u_i(b', b_{-i} | v_i) = \Ex[\tilde{b}' \sim b']{u_i(\tilde{b}', b_{-i} | v_i)} = p \cdot (v_i - \epsilon_{net}(v_i)) + (1-p) \mathbbm{1}[\tilde{b}' > \epsilon_{net}(v_i)] \cdot (v_i - \tilde{b}') = p \cdot (v_i - \epsilon_{net}(v_i)) + (1-p) \mathbbm{1}[\tilde{b}' > v_i] (v_i - \tilde{b}') < p \cdot (v_i - \epsilon_{net}(v_i)) < v_i - \epsilon_{net}(v_i)$. In any nature state and actualized outcome over the mixed bid $b'$, if $b$ does not win the item, then $b'$ does not win the item, or, alternatively, it wins it and receives negative utility. So, $\min_{b_{-i} \in D_{v_i}(b, b')} u_i(b, b_{-i} | v_i) \geq v_i - \epsilon_{net}(v_i) > \min_{b_{-i} \in D_{v_i}(b, b')} u(b', b_{-i} | v_i)$, and so by the DSL condition $b'$ is not DSL (and $b$ is DSL w.r.t. $b'$). 

\end{proof}

The simple intuition as to why the discrete first-price auction ``works'' (to guarantee a DSL strategy) and the first-price auction does not, is that in the first-price auction there is always a ``safer'' bid that would guarantee winning the item in more nature states. In the discrete first-price auction, due to bracketing, the highest bracket that can have positive utility is that DSL bid. Note that this is ``almost'' truthful: When $\epsilon_{net}(v_i) \neq v_i$, it is the closest bracket to $v_i$, and it is less than $\epsilon$ away from it. When $\epsilon_{net}(v_i) = v_i$ (which should be seen as a rare case, where the value precisely matches the epsilon net), it is not the truthful bracket, but it is $\epsilon$ close to it. It is also very close to optimal revenue for the auctioneer: If $n$ individually rational agents participate, the most the auctioneer can get is $\max_{1\leq i\leq n} v_i$. If they play DSL strategies, she will get at least $\max_{1\leq i\leq n} v_i - \epsilon$.

We note that for the discrete first-price auction, DSL identifies with multi-leximin. 
\begin{corollary}
The unique DSL strategy of the discrete first-price auction is also the unique multi-leximin strategy.
\end{corollary}
\begin{proof}
For an agent $i$ with value $v_i$ there is a finite amount of safety level strategies, namely all the strategies with $b_i \leq v_i$, the amount of which is at most $\lceil \frac{v_i}{\epsilon} \rceil + 1$. 
By Lemma~\ref{lem:leximin_existence}, there must exist a multi-leximin strategy. By Lemma~\ref{lem:leximin_loss_averse} it is also DSL. Since there is a unique DSL strategy by Lemma~\ref{lem:discrete_first_price}, it must also be the unique multi-leximin.
\end{proof}

On the other hand, we now see that min-max regret yields a different solution to the discrete first-price auction than \text{DSL}, i.e., the two notions do not imply each other. 
\cite{tennenholtz2001rational} previously applied min-max regret in auction settings, and in particular discussed the DFPA in their Claim 3.1, which we restate adapted to our notations: 
\begin{claim}
In the discrete first-price auction, the min-max regret strategy is to bid $\epsilon_{net}(\frac{v_i}{2})$. 
\end{claim}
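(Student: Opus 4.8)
The plan is to compute, for each feasible bid $b_i$, its maximum regret over all nature states $b_{-i}$ in closed form, and then minimize this quantity over the grid. The first step is to identify the best-response benchmark appearing in $Reg$: for a fixed nature bid $b_{-i}$, the utility-maximizing response is to bid the smallest grid point \emph{strictly} above $b_{-i}$ (since ties go to nature), so that $\max_{b'_i} u_i(b'_i, b_{-i}\mid v_i) = \max(0,\, v_i - (b_{-i} + \epsilon))$, with the understanding that this benchmark is $0$ once no feasible winning bid $\le \epsilon_{net}(v_i)$ yields positive utility. Substituting this into Definition~\ref{def:minmax_regret} turns $Reg(b_i, b_{-i}\mid v_i)$ into a piecewise-linear function of $b_{-i}$, which is the object I would then maximize.

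Next I would split the analysis according to whether the agent wins. When $b_{-i} \geq b_i$ the agent loses and earns $0$, so the regret equals the benchmark $\max(0,\, v_i - b_{-i} - \epsilon)$, which is nonincreasing in $b_{-i}$; the worst such state is therefore the tie $b_{-i} = b_i$, yielding a \emph{foregone-win} regret of $v_i - b_i - \epsilon$. When $b_{-i} < b_i$ the agent wins and pays $b_i$, so the regret equals $b_i - (b_{-i} + \epsilon)$, again nonincreasing in $b_{-i}$; the worst such state is $b_{-i} = 0$, yielding an \emph{overpayment} regret of $b_i - \epsilon$. Taking nonnegative parts and combining the two regimes gives the clean expression $MR(b_i) = \max\{\,v_i - b_i - \epsilon,\; b_i - \epsilon\,\}$, and along the way one must check feasibility of the benchmark bids at the endpoints $b_i \in \{0, \epsilon_{net}(v_i)\}$, where one of the two branches vanishes.

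Finally I would minimize $MR(b_i)$ over the discrete bid set. The foregone branch is decreasing and the overpayment branch increasing in $b_i$, with equal slopes $\mp 1$, so they are symmetric about their crossing point $b_i = v_i/2$; hence the minimizer is a grid point adjacent to $v_i/2$, and $\epsilon_{net}(v_i/2)$ realizes the minimum. I expect the delicate $\pm\epsilon$ bookkeeping forced by the strict tie-breaking rule and by discreteness to be the main obstacle, together with the rounding step: when $v_i/2$ falls between two grid points the two neighbors can tie, so the statement is best read as exhibiting a min-max regret bid rather than asserting uniqueness, while under the paper's convention of valuations on the $\epsilon$-grid the balancing at $v_i/2$ pins the minimizer down to $\epsilon_{net}(v_i/2)$ exactly.
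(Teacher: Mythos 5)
Your approach is essentially the paper's: reduce the max regret of a bid $b_i$ to the maximum of a decreasing foregone-win branch and an increasing overpayment branch, then minimize the resulting piecewise-linear function over the grid, which the paper does by directly verifying $Reg(b'_i)\geq Reg(\epsilon_{net}(\frac{v_i}{2}))$ for bids on either side of $\epsilon_{net}(\frac{v_i}{2})$. The only discrepancy is the overpayment branch: the paper takes it to be $b_i$ (its worst winning state is that no other bidders show up, so bidding and paying $0$ was possible), whereas you take $b_i-\epsilon$ (worst state is a competing bid of $0$, which the tie-breaking rule forces you to beat by $\epsilon$); this shifts the crossing point by $\epsilon/2$ and changes which neighboring grid point can tie, but $\epsilon_{net}(\frac{v_i}{2})$ remains a minimizer either way, and your caveat that the claim asserts a minimizer rather than uniqueness is apt.
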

\begin{proof}
For any bid $b_i$, the maximum regret is either $b_i$ itself (in the case when no other bidders show up and it was possible to bid and pay $0$), or $v_i - (b_i + \epsilon)$ (in the case when another bidder bids $b_i$ and the item goes to her.\footnote{This is true under worst-case arbitrary tie-breaking. If tie-breaking is uniformly random between bidders of the same bracket, this is still true as the limiting regret when there are $n\rightarrow \infty$ bidders in the same bracket} We are thus looking for $\argmin_{b_i} \max \{b_i, v_i - b_i - \epsilon \}$, among the $\epsilon_{net}$ feasible bids. 

For $b'_i < \epsilon_{net}(\frac{v_i}{2})$, the regret is thus at least 
\[
\begin{split}
& Reg(b'_i) \geq v_i - b_i - \epsilon \\
& \geq v_i - (\epsilon_{net}(\frac{v_i}{2}) - \epsilon) - \epsilon = v_i - \epsilon_{net}(\frac{v_i}{2}) \\
& \geq \max \{v_i - \epsilon_{net}(\frac{v_i}{2}) - \epsilon, \frac{v_i}{2}\} \\
& \geq \max \{v_i - \epsilon_{net}(\frac{v_i}{2}) - \epsilon, \epsilon_{net}(\frac{v_i}{2})\} = Reg(b_i).
\end{split}
\]

For $b'_i > \epsilon_{net}(\frac{v_i}{2})$, the regret is at least 
\[
\begin{split}
    & Reg(b'_i) \geq b'_i \geq \epsilon_{net}(\frac{v_i}{2}) + \epsilon \\
    & \geq \max \{ \epsilon_{net}(\frac{v_i}{2}), \frac{v_i}{2}\} \\
    & \geq \max \{ \epsilon_{net}(\frac{v_i}{2}), v_i - \epsilon_{net}(\frac{v_i}{2}) - \epsilon\} = Reg(b_i).
\end{split}
\]

We conclude that $\epsilon_{net}(\frac{v_i}{2})$ is the min-max regret bid strategy. 
\end{proof}

Finally, we use the discrete first-price auction to demonstrate the difference between leximin and DSL strategies. 
\begin{example}
\label{ex:leximin_not_loss_aversion}
We demonstrate that leximin is different from \text{DSL} using the discrete first-price auction. 
Bidding $0$ is the leximin action, as its set of outcomes is simply the set of two items $U_0 = \{0, v_i\}$: This is the leximin since any other bid $b_i > 0$ has $U_{b_i} = \{0, v_i - b_i\}$. 
\end{example}

\end{document}